\documentclass{article}

\usepackage[preprint]{neurips_2026}

\usepackage[utf8]{inputenc} %
\usepackage[T1]{fontenc}    %
\usepackage{hyperref}       %
\usepackage{doi}
\usepackage{url}            %
\usepackage{booktabs}       %
\usepackage{array}          %
\usepackage{amsfonts}       %
\usepackage{nicefrac}       %
\usepackage{microtype}      %
\usepackage{xcolor}         %
\usepackage{tikz}           %
\usetikzlibrary{positioning,arrows.meta,calc}
\usepackage{adjustbox}      %
\usepackage{float}          %

\hypersetup{
    colorlinks=true,       %
    linkcolor=blue!70!black, %
    citecolor=green!40!black, %
    urlcolor=blue!80!black    %
}

\title{Proving at Scale for Universal Algebra}

\author{%
  Jo\~{a}o Ara\'{u}jo \\
  Universidade Nova de Lisboa, Portugal \\
  \And
  Jan H\r{u}la \\
  Czech Technical University in Prague, Czechia \\
  \And
  Mikol\'{a}\v{s} Janota \\
  Czech Technical University in Prague, Czechia \\
  \And
  Edmond W.\ H.\ Lee \\
  Nova Southeastern University, USA\\
  \And
  Bartosz Naskr\k{e}cki \\
  Adam Mickiewicz University, Pozna\'{n}, Poland \\
  Warsaw University of Technology, Warsaw, Poland \\
}

\usepackage{amsmath,amssymb,amsthm}
\theoremstyle{definition}
\newtheorem{example}{Example}
\theoremstyle{plain}
\newtheorem{proposition}{Proposition}
\usepackage{xspace}
\newcommand{\project}{\textsc{SemiBase}\xspace}

\newcommand{\nSixRaw}{15973}        %
\newcommand{\nSealedSixRaw}{15973}  %
\newcommand{\nLowOrder}{\num{1309}\xspace}   %
\newcommand{\nSix}{\num{\nSixRaw}\xspace}
\newcommand{\nSealedSix}{\num{\nSealedSixRaw}\xspace}
\newcommand{\nOpenSix}{\num{\the\numexpr\nSixRaw-\nSealedSixRaw\relax}\xspace} %

\usepackage[disable]{todonotes}

\usepackage{tikz,graphicx}
\usepackage{subcaption}     %

\makeatletter\@secpenalty=0\makeatother
\usetikzlibrary{arrows.meta,calc,positioning}
\definecolor{apInk}{HTML}{183D5D}
\definecolor{apMuted}{HTML}{536574}
\definecolor{apLine}{HTML}{9EADB7}
\definecolor{apPanel}{HTML}{F6F8FA}
\definecolor{apTeal}{HTML}{227D8D}
\definecolor{apTealLight}{HTML}{EDF6F6}
\definecolor{apBlue}{HTML}{42759A}
\definecolor{apBlueLight}{HTML}{EFF4F9}
\definecolor{apAmber}{HTML}{A57424}
\definecolor{apAmberLight}{HTML}{FFF8EB}
\definecolor{apGreen}{HTML}{397B59}
\definecolor{apGreenLight}{HTML}{EFF7F1}
\definecolor{apRed}{HTML}{A75446}
\tikzset{
  ap/base/.style={font=\sffamily\fontsize{8.7}{10.7}\selectfont,
    text=apInk,inner sep=0pt,outer sep=0pt},
  ap/panel/.style={draw=apLine!65,fill=apPanel,rounded corners=3pt,line width=.5pt},
  ap/box/.style={draw=apLine,fill=white,rounded corners=2pt,line width=.65pt,
    align=center,inner xsep=5pt,inner ysep=6pt},
  ap/fable/.style={ap/box,draw=apTeal,fill=apTealLight},
  ap/codex/.style={ap/box,draw=apBlue,fill=apBlueLight},
  ap/mail/.style={ap/box,draw=apAmber,fill=apAmberLight},
  ap/accepted/.style={ap/box,draw=apGreen,fill=apGreenLight},
  ap/work/.style={<->,>={Latex[length=1.7mm,width=1.2mm]},
    draw=apMuted!80,line width=.7pt,rounded corners=2pt},
  ap/flow/.style={-{Latex[length=1.9mm,width=1.35mm]},
    draw=apInk,line width=.85pt,rounded corners=2pt},
  ap/fablemail/.style={ap/flow,draw=apTeal,line width=1.15pt},
  ap/codexmail/.style={ap/flow,draw=apBlue,line width=1.15pt},
  ap/failure/.style={ap/flow,draw=apRed,dashed,line width=.7pt},
  ap/tiny/.style={font=\sffamily\fontsize{7.4}{9.3}\selectfont,text=apMuted},
  ap/section/.style={font=\sffamily\bfseries\fontsize{9.4}{11.2}\selectfont,text=apInk},
  ap/code/.style={font=\ttfamily\fontsize{8.3}{10.4}\selectfont,
    text=apInk,align=left,anchor=north west},
  ap/edgelabel/.style={font=\sffamily\fontsize{7.1}{8.4}\selectfont,
    fill=apPanel,inner sep=1pt,text=apMuted}
}
 
\usepackage{siunitx}
\usepackage{listings}
\definecolor{LeanKeyword}{HTML}{1F4E79}
\definecolor{LeanTactic}{HTML}{6A3D9A}
\definecolor{LeanType}{HTML}{006D77}
\definecolor{LeanComment}{HTML}{5C6F5C}
\definecolor{LeanString}{HTML}{A23E48}
\definecolor{LeanRule}{HTML}{9AAABD}
\definecolor{LeanBackground}{HTML}{F7F9FC}
\lstdefinelanguage{lean4}{
  morekeywords={def,theorem,abbrev,inductive,structure,where,namespace,end,
    open,import,private,match,with,let,have,show,if,then,else,deriving},
  morekeywords=[2]{by,fun,cases,induction,intro,exact,refine,apply,simp,
    simpa,rw,subst,rfl,decide,only,at,using,change},
  morekeywords=[3]{Prop,Type,Nat,Fin,List,Word,Identity,Semigroup,
    FiniteTable,Derives},
  sensitive=true,
  morecomment=[l]{--},
  morecomment=[s]{/-}{-/},
  morestring=[b]",
}
\begin{document}

\maketitle

\begin{abstract}
We introduce \project, a project that computes and formally certifies finite
identity bases for small semigroups. Deciding finite basability is undecidable
for finite algebras and remains open for finite semigroups. The task requires a
proof that a candidate basis is complete, or a proof that none exists, rather
than a single first-order validity query. LLM-guided agents search for these
proofs; a referee agent rebuilds them from source, and the Lean kernel checks
the resulting corpus in a final audit. Humans choose targets and approve final
outcomes. We certify every semigroup of order at most 6: all \nLowOrder
semigroups of order at most 5 and all \nSealedSix of order 6, including proofs
that the four known nonfinitely based semigroups have no finite basis. The
bases for order 6 define \num{505} distinct varieties, whose complete inclusion
order we determine by finite evaluation. The resulting catalogue is a
machine-checked account of results scattered across the literature and a
tested foundation for order 7.
\end{abstract}

\section{Introduction}

\citet{tarski1968} asks the following question about a given finite algebra: can
all identities that hold in it be described by a finite set of identities?
Consider, for example, the group $\mathbb{Z}_2$ with addition modulo~2. In
addition to the group axioms, it satisfies the
identity $x+x=0$. It also satisfies $x+x+x+x=0$, but this is a simple
consequence of the former; in fact, every identity of the group follows from
it. The group $\mathbb{Z}_2$ therefore has a finite basis consisting of a single
identity, assuming the group axioms. \citet{Oates1964} show that
finite groups always have a finite basis. \citet{undecidability} shows that given a finite algebra as
input, deciding whether it has a finite basis is \emph{undecidable}.

The project \project tackles the finite-basis problem for finite semigroups, a setting that
has been studied extensively; see the surveys of \cite{volkov01} and \citet{survey}. The smallest
nonfinitely based semigroups have 6 elements; in fact, there are only 4 distinct
nonfinitely based semigroups of that order as shown by \citet{Lee2012}. A complete catalogue of nonfinitely
based semigroups of order 7 is not yet known. In general, it is not known
whether the existence of a finite basis is decidable, but it is believed to be
undecidable. This provides a setting for our project to explore the frontiers of
what is known about small semigroups.

The project is building a catalogue in which each covered semigroup is equipped
with either a certified finite basis or a certified proof that it does not have one.
Proofs are carried out in Lean~4~\citep{lean}. Currently, the project covers
all \nLowOrder semigroups of order at most 5 and all \nSix semigroups of
order 6. The work was done by language-model agents. At our direction they
wrote the scripted pipeline that settled most of the catalogue; they proposed
candidate bases and searched for counter-models to them, wrote the proofs shared
by families of semigroups and the generators of the individual proofs, and ran
every build. Humans chose which semigroups to attack, approved changes to the
repository and restarted stalled agents, with about \num{1100} prompts over
three months (\autoref{sec:effort}). Mathematicians working on the finite
basis problem stay in the loop~\citep{avigad2026mathematiciansageai}.

To our knowledge, \project is unique in using an autonomous, agent-driven
workflow to pursue a systematic classification of finite identity bases for
small semigroups. A useful comparison is the Equational Theories Project (ETP),
a large-scale, machine-assisted study of equational reasoning whose results are
checked in Lean. ETP determined the implication relations among
laws on magmas by combining human-generated proofs and constructions
with automated theorem provers and other computational tools~\citep{etp}.
Subsequent experiments showed that Vampire proves every valid implication
in this benchmark and refutes the vast majority of the invalid
ones~\citep{janota2025experiments}. The projects differ primarily in how their
search spaces are organized. ETP studies a fixed finite graph of implications
between individual laws, whereas \project organizes instances by finite
semigroup and asks whether the semigroup's full equational theory has a finite
basis. This requires producing a candidate basis and certifying its
completeness, or proving that no finite basis exists.
COPRA~\citep{copra} runs a language model inside a backtracking proof search
and retrieves lemmas from a library, and LEGO-Prover~\citep{legoprover} grows
its own library of verified lemmas; our family proofs and shared library play
the role of such a library at the scale of a catalogue. Studies of language
models as proof checkers find their verdicts sensitive to the prompt and to the
form of the proof~\citep{naik2026frontier,mazdarani2026answerkey}. Our referee
agent is therefore not trusted to verify: it triages work and settles disputes,
but a class is accepted only when the Lean kernel re-checks its proof in the
whole-corpus audit.

Our contributions are the following. (i)~Lean proofs, rebuilt from source in a
single audit, that each semigroup of order at most 6 has a finite basis or, for
four semigroups of order 6, none; to our knowledge this is the first
machine-checked account of a classification that so far existed only across
the literature~\citep{Lee2011,leezhang2015,survey}. (ii)~The complete inclusion order of the
\num{505} varieties generated by the finitely based semigroups of order 6,
determined by finite evaluation of their certified bases, and \num{289} hard
unit-equality problems encountered during proof search with
Vampire~\citep{vampire}, prepared for the TPTP library~\citep{tptp}.
(iii)~A description of the agent workflow and its acceptance procedure, with
measured effort, failure rates and the case where the agents stalled.
\section{Preliminaries}
\begin{table}
  \caption{Number of distinct semigroups per order, and how many of them are nilpotent.}\label{tab:counts}
  \begin{center}
    \begin{tabular}[c]{rrrr}
      \toprule
      Order & Semigroups & Nilpotent & $3$-nilpotent \\
      \midrule
      \num{1} & \num{1} & \num{1} & \num{0} \\
      \num{2} & \num{4} & \num{1} & \num{0} \\
      \num{3} & \num{18} & \num{2} & \num{1} \\
      \num{4} & \num{126} & \num{10} & \num{8} \\
      \num{5} & \num{1160} & \num{93} & \num{84} \\
      \num{6} & \num{15973} & \num{2813} & \num{2660} \\
      \num{7} & \num{836021} & \num{616830} & \num{609797} \\
      \bottomrule
    \end{tabular}
  \end{center}
\end{table}

A \emph{semigroup} is a structure with an associative binary operation. In the
remainder of the paper we denote the operation by $*$ and assume that it
operates on the elements $1,\dots,n$. The number of elements that the semigroup
operates on, $n$, is called the \emph{order}  of the semigroup. Whenever
convenient we omit the $*$; for instance, $x*x=x$ is abbreviated as $xx=x$ or $x^2=x$.

A finite semigroup is represented explicitly by its multiplication table.
Table~\ref{tab:counts} gives the number of \emph{distinct} semigroups,  where
\emph{distinct} means that isomorphic and anti-isomorphic semigroups are counted
only once  (we ignore element relabelling and transpositions of the multiplication table).

Fix a countably infinite set $\mathcal{X}$ of variables and let $\mathcal{X}^{+}$
be the nonempty words over $\mathcal{X}$ constructed by the operator $*$.
An \emph{identity} is an equality $u\approx v$ with
$u,v\in\mathcal{X}^{+}$. A semigroup $S$ \emph{satisfies} $u\approx v$, written
$S\models u\approx v$, if $\varphi(u)=\varphi(v)$ for every homomorphism
$\varphi\colon\mathcal{X}^{+}\to S$, that is, for every map $\mathcal{X}\to S$
extended multiplicatively to words. The \emph{equational theory} of $S$ is the set
$\mathrm{Id}(S)=\{\,u\approx v\mid S\models u\approx v\,\}$ of all identities
that hold in $S$.
A set of identities $\Sigma\subseteq\mathrm{Id}(S)$ is an \emph{identity basis}
of $S$ if every identity in $\mathrm{Id}(S)$ is derivable from $\Sigma$ by
Birkhoff's rules of equational deduction~\citep{birkhoff1935}; equivalently, if
every semigroup that satisfies $\Sigma$ satisfies all of $\mathrm{Id}(S)$.
Semigroup $S$ is \emph{finitely based} if it has a finite identity basis and
\emph{nonfinitely based} otherwise.

A semigroup has a \emph{zero element} $z$ if $\forall x.\ xz=zx=z$; additionally
it is \emph{nilpotent} if there is a
$k\in\mathbb{N}$ such that $\forall x_1,\dots,x_k.\ x_1\dots x_k=z$.
Nilpotent semigroups are a sixth of those of order 6 but three quarters of
those of order 7 (see \autoref{tab:counts}). They are
the easy case for our purposes, since a nilpotent semigroup of degree $k$
satisfies $x_1\dots x_k \approx y_1\dots y_k$ and is therefore finitely based
(intuitively, only short identities matter).

\begin{example}\label{ex:leftzero}
  Let $S_\ell$ be the semigroup on $\{1,2\}$ given by left projection,
  $x*y=x$; associativity holds because $(x*y)*z=x=x*(y*z)$. Under any
  assignment a word evaluates to the value of its leftmost variable, so
  $S_\ell\models u\approx v$ exactly when $u$ and $v$ begin with the same
  variable. The identity $xy\approx x$ is a basis: it rewrites any word to its
  first variable, after which the two sides of an identity of
  $\mathrm{Id}(S_\ell)$ coincide. Note that $S_\ell$ has no zero element, since
  $1*2=1$ and $2*1=2$, and is thus not nilpotent, yet it is still finitely
  based. \autoref{sec:proof} prints the Lean proof of this example.
\end{example}

A semigroup $T$ belongs to the \emph{variety generated by} $S$, denoted
$\mathsf{V}(S)$, exactly when it satisfies every identity of $S$~\citep{birkhoff1935}.
Any basis of $S$ therefore defines $\mathsf{V}(S)$.

\providecommand{\project}{\textsc{SemiBase}\xspace}

\section{The proof factory}\label{sec:factory}

\project produces its proofs with a small team of language-model agents
(\autoref{fig:factory} in the appendix). Three workers do the mathematics. A coordinator assigns
semigroups to them, allocates compute and runs independent rebuilds. A referee
certifies candidate bases, settles disputes between agents and decides what
enters the catalogue. The workers and the coordinator are Codex agents; the
referee is a Claude agent, so the reviewer and the reviewed come from different
model families. Humans at occasional intervals choose which semigroups to attack next and
restart stalled agents. Each agent appends records to a
shared log and can exchange files with other agents through per-agent mailboxes;
every record names the earlier records it builds on, so every proof can be traced back through the log to the human decision that started it.

\paragraph{Who did what.}
The pipeline that sealed the first \num{14989} classes is a set of scripts,
written by the agents at our direction in June and July, that runs without a
language model in the loop. It applies bases found by a bounded search over
short identities or taken from the literature. In this sense it is the
deterministic baseline: it settled 94\% of the classes and stopped. The
\num{984} classes it left open needed individual work: the agents proposed
candidate bases, searched for counter-models and wrote completeness proofs,
for a family of semigroups or for a single one. Published bases entered as
inputs where they existed, as for the neighbour of $L$ below. Humans wrote no
proofs: they chose targets, approved pushes and restarted agents,
with \num{440} prompts to the Codex agents on 72 days and \num{674} to the
referee on 61 days (\autoref{sec:effort}).

\paragraph{From a candidate to a proof.}
A candidate basis is first screened for counter-models: a semigroup that
satisfies the candidate but not some identity of the target refutes it.
A candidate that survives is written in Lean~4 as a theorem for the semigroup and for its \emph{transpose},
that is, anti-isomorphic copy. Showing that the
candidate identities hold on the table is a simple check; the main effort is in proving
that every identity of the table follows from the candidates. The
proofs that succeeded all have the same shape. A normal form for words is
fixed, together with a few rewriting rules derived from the basis, and every
word is shown to rewrite to its normal form. The normal form is then shown to
be determined by finitely many features of the word that the table can see,
such as the order in which letters first occur, the order in which they last
occur, and the parity of each letter. Semigroups that share a basis share this proof; only the finite
check on each table differs. We consider a given class \emph{sealed} once the referee has
rebuilt its proof from scratch together with the whole corpus, checked in the
kernel that the theorem's table is the catalogue table and that only the
standard axioms of Lean are used, and recorded the theorem names and file
hashes in a certificate. Both loops of \autoref{fig:factory} were busy: from
8 August to 6 September, \num{76} candidates were refuted before any Lean was
written, and \num{143} of \num{443} independent kernel checks sent a proof
back to its author; after two failed attempts, later three, a worker had to
stop and report to the referee.

\paragraph{The neighbour of $L$.}
The hardest class shows why the problem is difficult at all. It is the
semigroup $C_8$ of Section~19 of \citet{leezhang2015} (\num{3842} in
\textsc{Smallsemi}), and it differs from $L$, one of the four nonfinitely based
semigroups of order~6~\citep{Lee2012}, in exactly two entries of its
multiplication table. That difference is enough to cut off the infinite family
of identities that makes $L$ nonfinitely based~\citep{zhangluo2011}. The class
is finitely based, with the basis of their Proposition~19.1. The first candidate, found by
bounded search and approved by the referee, had nine identities in at most
three variables. A worker refuted it with a semigroup of order 20, larger than
any the screens examine, that satisfies all nine but not
$xykxty\approx xxkxtyyx$, an identity of the class. The agents switched to the
published basis, but its completeness proof did not close for another week,
until a structural analysis by the referee supplied a test for equality of
words in the class and a normal form. With these, one agent completed the proof
overnight, reusing much of the earlier attempt; it was the last proof of the
campaign (\autoref{sec:c8}).

\paragraph{Where the effort went.}
Two decisions saved the most work: sharing one basis across a family of
semigroups, and looking for counter-models before writing Lean. Two mistakes
cost the most: searching for bases with a word-length bound that was too small,
and checking classes one batch at a time instead of all together.
\section{Results}\label{sec:results}

\paragraph{Coverage.}
All \nLowOrder semigroups of order at most 5 and all \nSix semigroups of
order 6 are sealed. The four semigroups of order 6 known to be nonfinitely
based, $L$, $B_2^1$, $A_2^g$ and $A_2^1$~\citep{Lee2012} (numbers \num{3843},
\num{8564}, \num{8878} and \num{13747} in \textsc{Smallsemi}~\citep{smallsemi}),
carry Lean proofs that no finite basis exists; every other sealed class carries
a certified basis. The four proofs formalize the published
arguments~\citep{perkins1969,trahtman1987,sapir1988,zhangluo2011,leezhang2015}:
for every $k$, an identity of the semigroup does not follow from identities
in at most $k$ variables, because a property of words separating its two sides
survives every derivation step with them (\autoref{sec:nfb}).

\subsection{Proof size}\label{sec:proofsize}

The certificates differ in size by four orders of magnitude. Most classes are
covered by a family proof: one basis and one proof shared by every semigroup in
the family, with only the check on each multiplication table repeated. The
remaining \num{445} semigroups of order 6 have a generated proof of their own.
The common step is a transfer from a semigroup certified earlier: if
a semigroup $S$ with basis $B$ embeds in a direct power of the target $T$, or
is a quotient of a subsemigroup of $T$, then every identity of $T$ holds in $S$
and so follows from $B$; a check on the table that $T$ satisfies $B$ then makes
$B$ a basis for $T$. Others combine the bases of the factors of a subdirect
product, adapt the basis of a semigroup that $T$ inflates, or enumerate the
identities of a nilpotent $T$. \autoref{fig:sizes} (left)
shows the sizes of these generated proofs, all files of a proof counted
together. The median proof has a few hundred lines. Nine exceed
\num{100000} lines and together hold close to half of all generated proof text;
the largest runs to \num{333557}. Family proofs,
which an agent writes as ordinary Lean, are shorter, none above \num{10000}
lines; \autoref{tab:lengths} in the appendix gives the quantiles by kind of
file. Lines of Lean measure the size of a certificate, not the effort of
finding it: the largest proofs are generated mechanically, while the
\num{2918} new lines for the neighbour of $L$ took four weeks. The mathematics
sits in the family proofs (median 170 lines) and the shared library.

\begin{figure}[t]
  \centering
  \includegraphics[width=0.49\linewidth]{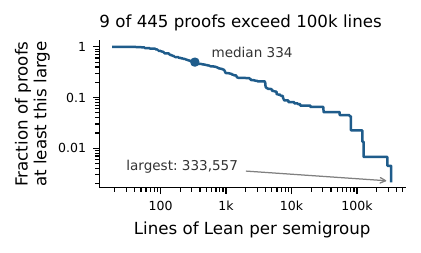}\hfill
  \includegraphics[width=0.49\linewidth]{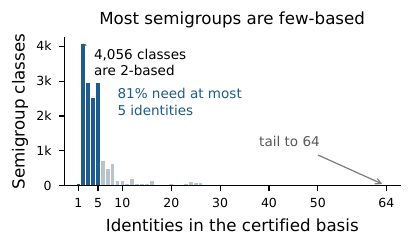}
  \caption{Left: share of the \num{445} generated proofs at least as long as a
  given number of lines, all files of a proof counted together (log axes).
  Right: the \num{15524} classes sealed by family proofs, by the size of their
  basis as certified (22 of all 536 certified bases repeat an identity up to
  renaming), before the reduction of \autoref{sec:basissize}; at most five
  identities highlighted.}\label{fig:sizes}
\end{figure}

\subsection{Basis size and the order of varieties}\label{sec:basissize}

\autoref{fig:sizes} (right) shows the sizes of the bases certified for the
\num{15524} classes sealed by family proofs. These bases fall into \num{486}
syntactically distinct lists of between \num{1} and \num{64} identities,
median \num{4}. Over four fifths of the
classes have a basis of at most five identities, \num{4056} have a basis of two
and \num{53} a basis of one; at the other end, \num{74} classes were sealed
with \num{26} identities and a few with up to \num{64}.

The certified bases need not be minimal, and distinct bases may define the
same variety. Vampire removes identities that follow from
the others, shrinking the \num{536} certified bases from
\num{7919} identities to \num{1781}; in \num{509} of the reduced bases every
remaining identity is shown independent of the others by a counter-model,
although an irredundant basis need not have minimum size. For
the \num{143380} pairs of bases, a semigroup of the order-6 catalogue
separates all but \num{31}, and Vampire proves each of those \num{31} pairs
equivalent. The bases therefore define exactly \num{505} distinct varieties.

Finite evaluation completes the inclusion order
(\autoref{sec:varietyorder}): \num{13325} strict inclusions,
\num{1317} covering pairs and \num{87} maximal varieties. The four directions
left undecided by Vampire all hold. The \num{289} hardest proof-search problems
are prepared for the TPTP library~\citep{tptp}. The Vampire proofs and table
evaluations have not been formalized in Lean.

\section{Summary and Conclusions}

We have introduced the project \project, which catalogues and formally
certifies finite identity bases of small semigroups. The finite basis problem,
originally proposed by \citet{tarski1968}, has a longstanding tradition in
universal algebra and is undecidable in general; the semigroup case remains
open. Identity bases also matter because they can be seen as syntactic
counterparts of varieties.

We show that the problem benefits from being attacked \emph{en masse} rather
than one semigroup at a time, because some semigroups share bases and some of
the proof techniques recur. At this stage of the project, all \nLowOrder
semigroups of order at most 5 and all \nSix of order 6 carry a Lean
certificate---either a finite basis or a proof that none exists.
The Lean development, the certificate of the final audit and scripts that check
both are archived on Zenodo~\citep{semibase2026zenodo}.

Agents produced the proofs, a referee and a whole-corpus audit made them
trustworthy, and humans chose targets and approved pushes but wrote no proofs.
All the proofs are kernel-checked in Lean~4.
The catalogue is itself an object of study: Vampire has reduced its bases,
finite evaluation has completed the inclusion order of the \num{505} varieties,
and \num{289} hard unit-equality problems provide benchmarks for automated
provers.
The hardness of the proofs is strongly long-tailed: many of the proofs are
almost routine, but a huge amount of resources must be dedicated to the last
remaining few. This also suggests where to look for interesting proof
techniques and problems; eventually we are likely to hit a case where the
agents fail.

The next target is order 7, with \num{836021} semigroups, \num{219191} of them
not nilpotent, an order of magnitude more than order 6. Order 6 took three
months, about \num{2260} agent sessions and \num{1500} core-hours of Lean
builds, and its cost sat in the tail: \num{15583} of its \num{15973} classes
were sealed by 14 August, the remaining \num{390} took 23 more days, and the
last needed a structural analysis before a proof could be written. Order 7
therefore needs the techniques used so far streamlined so that the tail, not
only the bulk, is discharged automatically. The workflow carries over to any
catalogue whose claims can be checked on each instance and stated in a proof
assistant.

\subsubsection*{Acknowledgements}
The research was supported by the European Union under the project
\textsf{ROBOPROX} (reg.\ no.\ CZ.02.01.01/00/22\_008/0004590) and is part of the
\textsf{RICAIP} project that has received funding from the European Union's
Horizon~2020 research and innovation programme under grant agreement No~857306.
Computational resources were supported by the Ministry of Education, Youth and Sports of the Czech Republic through the e-INFRA CZ (ID:90254).

\bibliographystyle{plainnat}

\clearpage
\appendix
\providecommand{\project}{\textsc{SemiBase}\xspace}

\section{The proof factory, step by step}\label{sec:factoryfig}

\autoref{fig:factory} shows the steps from a multiplication table to a sealed
class, arranged by who performs them; \autoref{sec:factory} describes them.

\begin{figure}[H]
\centering
\adjustbox{max width=\linewidth, max height=41mm}{%
\begin{tikzpicture}[x=1mm, y=1mm, font=\footnotesize,
  stp/.style={draw, rounded corners=3pt, align=center, inner sep=2pt, text width=20mm, minimum height=10mm, line width=0.5pt},
  wk/.style={stp, fill=blue!12, draw=blue!45!black},
  cd/.style={stp, fill=blue!28, draw=blue!45!black},
  sv/.style={stp, fill=orange!28, draw=orange!60!black},
  lbl/.style={anchor=west, align=left, font=\footnotesize\bfseries, text width=19mm, inner sep=1pt},
  arr/.style={-{Latex[length=2mm]}, line width=0.7pt, draw=black!75},
  loop/.style={-{Latex[length=2mm]}, line width=0.7pt, dashed, draw=red!70!black},
]
  \fill[blue!7, rounded corners=3pt] (0,-6.5) rectangle (137.5,6.5);
  \fill[blue!14, rounded corners=3pt] (0,-20.5) rectangle (137.5,-7.5);
  \fill[orange!9, rounded corners=3pt] (0,-34.5) rectangle (137.5,-21.5);
  \fill[yellow!22, rounded corners=3pt] (0,-42) rectangle (137.5,-36);
  \node[lbl] at (1.5,0) {Workers\\ {\scriptsize\mdseries three Codex agents}};
  \node[lbl] at (1.5,-14) {Coordinator\\ {\scriptsize\mdseries a Codex agent}};
  \node[lbl] at (1.5,-28) {Referee\\ {\scriptsize\mdseries Fable, a Claude agent}};
  \node[anchor=west, font=\scriptsize, inner sep=1pt, text width=134mm] at (1.5,-39)
    {\textbf{Shared log and mailboxes:} every step is recorded, every record names the records it builds on, and humans read the log to choose targets and settle disputes.};
  \node[wk] (s1) at (31.75,0)  {\textbf{1}~Input\\ table and known bases};
  \node[wk] (s2) at (55.25,0)  {\textbf{2}~Choose a\\ candidate basis};
  \node[wk] (s3) at (78.75,0)  {\textbf{3}~Search for a\\ counter-model};
  \node[wk] (s4) at (102.25,0) {\textbf{4}~Write the\\ Lean proof};
  \node[wk] (s5) at (125.75,0) {\textbf{5}~Review by\\ another worker};
  \node[cd] (s6) at (125.75,-14) {\textbf{6}~Clean rebuild,\\ kernel check};
  \node[cd] (s7) at (102.25,-14) {\textbf{7}~Hand the\\ commit over};
  \node[sv] (s8)  at (78.75,-28) {\textbf{8}~Assemble\\ the whole corpus};
  \node[sv] (s9)  at (55.25,-28) {\textbf{9}~Rebuild all; check\\ tables and axioms};
  \node[sv] (s10) at (31.75,-28) {\textbf{10}~Certify;\\ update the count};
  \foreach \a/\b in {s1/s2, s2/s3, s3/s4, s4/s5, s6/s7, s8/s9, s9/s10} \draw[arr] (\a) -- (\b);
  \draw[arr] (s5) -- (s6);
  \draw[arr] (s7.south west) -- (s8.north east);
  \draw[loop] (s3.north) to[bend right=50] (s2.north);
  \draw[loop] (s9.north) to[out=60, in=-120] (s4.south west);
\end{tikzpicture}}
\caption{The steps from a multiplication table to a sealed class, arranged by who performs them (blue: Codex agents; orange: the referee). The dashed red arrows return work to an earlier step: a refuted candidate goes back to step~2, a proof that fails the audit back to step~4.}
\label{fig:factory}
\end{figure}

\section{How the Lean proofs are organised}\label{sec:anatomy}

\autoref{fig:anatomy} shows how the Lean code is layered and what a single
proof module contains. The acceptance repository, which adds the audit and
wrapper modules to the proof repository counted in Section~\ref{sec:proofsize},
holds \num{9329} files and \num{6.94} million lines, on Lean~4.28 without
Mathlib. The shared library is small; nearly all of the volume is proofs and
audits about individual semigroups.

\begin{figure}[!htb]
\centering
\adjustbox{max width=\linewidth, max height=64mm}{%
\begin{tikzpicture}[x=1mm, y=1mm, font=\scriptsize,
  band/.style={draw=black!40, rounded corners=3pt, anchor=south west, minimum width=88mm, minimum height=10mm, inner sep=0pt, line width=0.4pt},
  half/.style={band, minimum width=43mm},
  ttl/.style={anchor=west, font=\footnotesize\bfseries, inner sep=0pt},
  sub/.style={anchor=north west, font=\scriptsize, inner sep=0pt, text width=66mm},
  badge/.style={anchor=east, font=\scriptsize, inner sep=1.5pt, fill=white, rounded corners=2pt, draw=black!35},
  part/.style={draw=black!40, rounded corners=3pt, anchor=south west, minimum width=46mm, minimum height=11mm, inner sep=0pt, line width=0.4pt},
  psub/.style={anchor=north west, font=\scriptsize, inner sep=0pt, text width=43mm},
  arr/.style={-{Latex[length=1.8mm]}, line width=0.6pt, draw=black!70},
]
  \node[band, fill=gray!16] at (0,0) {};
  \node[ttl] at (2,7.9) {Lean 4 core};
  \node[sub] at (2,5.8) {only the three standard axioms: \texttt{propext}, \texttt{Classical.choice}, \texttt{Quot.sound}};
  \node[badge] at (86.5,7.9) {version 4.28, no Mathlib};

  \node[band, fill=blue!11] at (0,12.5) {};
  \node[ttl] at (2,20.4) {Shared library};
  \node[sub] at (2,18.3) {identities, derivation, \texttt{BasisFor}; deciding an identity on a finite table; transfer lemmas};
  \node[badge] at (86.5,20.4) {61 files, 35k lines};

  \node[band, fill=cyan!12] at (0,25) {};
  \node[ttl] at (2,32.9) {Catalogue data};
  \node[sub] at (2,30.8) {every semigroup of order at most 6 as an explicit multiplication table};
  \node[badge] at (86.5,32.9) {14 files};

  \node[half, fill=orange!16] at (0,37.5) {};
  \node[ttl] at (2,45.4) {Family proofs, written by agents};
  \node[sub, text width=39mm] at (2,43.3) {one basis for several semigroups; \num{2443} files, 833k lines};
  \node[half, fill=orange!16] at (45,37.5) {};
  \node[ttl] at (47,45.4) {Generated proofs};
  \node[sub, text width=39mm] at (47,43.3) {transfer from a semigroup solved earlier; \num{4372} files, 3.86M lines};

  \node[band, fill=green!13] at (0,50) {};
  \node[ttl] at (2,57.9) {Audit modules};
  \node[sub] at (2,55.8) {the list of semigroups covered, a check of every final theorem, the axioms used, identities that must fail};
  \node[badge] at (86.5,57.9) {\num{1213} files};

  \node[band, fill=yellow!28] at (0,62.5) {};
  \node[ttl] at (2,70.4) {Certificate of the final audit};
  \node[sub] at (2,68.3) {rebuild of the whole corpus, kernel check of every table and of the axioms, hashes of the sources; \nSealedSix of \nSix classes sealed};
  \node[badge] at (86.5,70.4) {one certificate};

  \draw[arr] (44,10) -- (44,12.5);
  \draw[arr] (44,22.5) -- (44,25);
  \draw[arr] (30,35) -- (30,37.5);
  \draw[arr] (60,35) -- (60,37.5);
  \draw[arr] (30,47.5) -- (30,50);
  \draw[arr] (60,47.5) -- (60,50);
  \draw[arr] (44,60) -- (44,62.5);

  \node[anchor=south west, font=\footnotesize\bfseries, inner sep=0pt] at (92,72.5) {One proof module, top to bottom};
  \node[part, fill=gray!14] at (92,59) {};
  \node[ttl] at (94,67.6) {Header};
  \node[psub] at (94,65.4) {imports, the claim, a namespace};
  \node[part, fill=blue!11] at (92,45.5) {};
  \node[ttl] at (94,54.1) {Data};
  \node[psub] at (94,51.9) {multiplication table; candidate identities \texttt{B}};
  \node[part, fill=cyan!12] at (92,32) {};
  \node[ttl] at (94,40.6) {Lemmas};
  \node[psub] at (94,38.4) {each identity of \texttt{B} holds; every identity of the table follows from \texttt{B}};
  \node[part, fill=orange!16] at (92,18.5) {};
  \node[ttl] at (94,27.1) {Final theorems};
  \node[psub] at (94,24.9) {\texttt{BasisFor} for the semigroup and for its transpose};
  \node[part, fill=green!13] at (92,5) {};
  \node[ttl] at (94,13.6) {Trailer};
  \node[psub] at (94,11.4) {the result restated; \texttt{\#print axioms}};
  \draw[arr] (115,59) -- (115,56.5);
  \draw[arr] (115,45.5) -- (115,43);
  \draw[arr] (115,32) -- (115,29.5);
  \draw[arr] (115,18.5) -- (115,16);
\end{tikzpicture}}
\caption{Left: the layers of the Lean code, each built on the one below. Right: the parts of a single proof module.}
\label{fig:anatomy}
\end{figure}

\paragraph{What a proof module contains.}
Every module, whether it treats one semigroup or a family of semigroups with a
common basis, has the same shape. The header imports the shared library and
the family's own lemma files, states the claim in a comment and opens a
namespace named after the semigroup. The data are the multiplication table and
the candidate identities. The lemmas show that each candidate identity holds
on the table, which Lean checks by computation, and that every identity the
table satisfies follows from the candidates; this second half is proved through
normal forms or by decomposing the semigroup into a subdirect product of
smaller semigroups that were treated before. The final theorems state
\texttt{BasisFor} for the semigroup and for its transpose. A short trailer
restates the result and prints the axioms used. Generated modules differ only
in the middle: they show that the target has the same identities as a
semigroup solved earlier and reuse its basis; their heavy computations are
split into several files.

\paragraph{Shared files.}
The core library is imported by everything. Its central file,
\texttt{Equational.lean}, has 91 lines; it defines identities, satisfaction,
models, derivation and \texttt{BasisFor}, and proves that derivation is sound.
Each family keeps its own lemma files next to the proofs of its members, and
each module imports only what it needs. A batch is therefore built from a fixed
commit, and its audit module lists exactly which semigroups it covers.

\paragraph{The campaign.}
The campaign covers the \num{15973} semigroups of order~6. A scripted pipeline
first sealed \num{14989} of them and left \num{984} open. The agents then took
the remainder in \num{102} batches and sealed \num{943} more: \num{29} batches
of ten or more semigroups with a common strategy, and \num{73} small batches
down to single hard cases. The last \num{41} were delivered in the final week
and certified together by the whole-corpus audit, after which all \nSix
classes are sealed. Under the batch process used until 5 September, a batch
with a single semigroup took 37~minutes from submission to sealing and an
independent verification job 2 to 20 minutes. The largest single build, the
\num{333557}-line proof of one semigroup, took more than a day on one core.

\paragraph{Proof lengths.}
Section~\ref{sec:proofsize} plots the size of the generated proofs per
semigroup. \autoref{tab:lengths} gives the quantiles at the certified commit,
per semigroup and by kind of file. The distribution has a heavy tail. Among
the \num{445} semigroups with a generated proof of their own, the median proof
has 334 lines but the mean is \num{7227}. Nine semigroups need more than
\num{100000} lines each and together hold close to half of all generated proof
text; the largest, at \num{333557} lines, is the proof whose build took a day.
Family proofs, written by agents, are much shorter (median 170 lines, none
above \num{10000}), because a shared basis and a subdirect decomposition
replace the direct transfer that a generated proof carries out. The typical
proof can be read by a person; the largest cannot, a growth in certificate size
that recalls the busy-beaver problem~\citep{Rado1962}. The last classes to be
sealed belong to this tail.

\begin{table}[H]
\centering\scriptsize
\adjustbox{max width=\linewidth}{%
\begin{tabular}{@{}lrrrrrrr@{}}
\toprule
Kind of unit (lines of Lean) & $n$ & median & p75 & p90 & p99 & max & total \\
\midrule
Generated proof of one semigroup (module and its parts) & 445 & 334 & \num{1466} & \num{7398} & \num{126350} & \num{333557} & 3.22M \\
Family proof file, written by an agent & \num{2443} & 170 & 399 & 842 & \num{2478} & \num{9568} & 833k \\
Audit module & \num{1213} & 32 & 63 & 302 & \num{1669} & \num{3626} & 230k \\
Shared library file & 61 & 207 & 382 & 538 & \num{18584} & \num{18584} & 35k \\
\bottomrule
\end{tabular}}\\[4pt]
\adjustbox{max width=\linewidth}{%
\begin{tabular}{@{}l*{8}{r}@{}}
\toprule
Generated proofs by size (lines) & $<100$ & 100--300 & 300--1k & 1k--3k & 3k--10k & 10k--30k & 30k--100k & $\ge 100$k \\
\midrule
semigroups & 75 & 142 & 94 & 40 & 59 & 7 & 19 & 9 \\
\bottomrule
\end{tabular}}
\caption{Proof lengths at the certified commit of the final audit: quantiles over units of each kind (top), and the number of generated per-semigroup proofs in each size class (bottom).}
\label{tab:lengths}
\end{table}

\section{Effort and resources}\label{sec:effort}

\autoref{tab:effort} collects what the campaign cost, measured from the
repository history, the agents' session logs, the shared log and the
accounting of the two clusters that ran the Lean builds. Entries marked
``about'' are rounded. The shared log was introduced for the finishing
campaign on 8 August, so the counts taken from it cover the last month only.

\begin{table}[H]
\centering\small
\begin{tabular}{@{}>{\raggedright\arraybackslash}p{0.34\linewidth}>{\raggedright\arraybackslash}p{0.62\linewidth}@{}}
\toprule
Duration & 9 June to 6 September 2026: 90 days, with commits on 84 of them \\
Commits & \num{4923} on all branches, \num{3588} of them in the history of the certified commit \\
Workers and coordinator & Codex agents (Codex CLI, GPT-5-family models): about \num{2260} sessions, of which about 660 in June, \num{1510} in July and 90 later; over 100\,GB of session logs \\
Referee & a Claude agent in Claude Code: \num{43236} model turns and 41.9 million output tokens up to 6 September \\
Human prompts & \num{440} to the Codex agents on 72 days, \num{674} to the referee on 61 days \\
Shared log, 8 August to 6 September & \num{4046} records: \num{1609} from the coordinator, \num{1242} from the workers, 385 from the referee, the rest job records \\
Candidates refuted by a counter-model before any Lean was written & 76 \\
Independent kernel checks & 443 jobs: 291 passed, 143 failed and sent the proof back to its author, 9 ended otherwise \\
Referee records that reject, withdraw or hold a proposal & about 106 of 385 \\
Lean builds on the WMI cluster & \num{18222} jobs, about \num{1500} core-hours \\
Largest single build & S6\_12824, a proof of \num{333557} lines: more than 24 hours on one core \\
Final audit & about \num{7200} modules; seven runs over two days, the last, incremental, 4\,h\,6\,min on 4 cores \\
Independent re-certification & a fresh clone rebuilt with 96 parallel jobs in 2\,h: all \num{15973} classes, no failures \\
\bottomrule
\end{tabular}
\caption{Effort and resources of the order-6 campaign.}\label{tab:effort}
\end{table}

\paragraph{The long tail.}
\autoref{tab:census} shows the number of sealed classes over the finishing
campaign. The scripted pipeline left \num{984} classes open; by 14 August only
\num{390} remained, and those took another 23 days. The last \num{41} were
certified together by the final audit.

\begin{table}[H]
\centering\small
\begin{tabular}{@{}lr@{\hspace{2em}}lr@{\hspace{2em}}lr@{}}
\toprule
Date & Sealed & Date & Sealed & Date & Sealed \\
\midrule
scripted stage & \num{14989} & 28 August & \num{15776} & 2 September & \num{15928} \\
14 August & \num{15583} & 29 August & \num{15856} & 3 September & \num{15930} \\
25 August & \num{15649} & 30 August & \num{15886} & 4 September & \num{15932} \\
26 August & \num{15711} & 31 August & \num{15894} & 6 September & \num{15973} \\
27 August & \num{15720} & 1 September & \num{15921} & & \\
\bottomrule
\end{tabular}
\caption{Classes of order 6 sealed, at the last census of each day on which
the census changed. The final audit of 6 September certified all \nSix
together.}\label{tab:census}
\end{table}

\paragraph{Nilpotent classes.}
The records do not let us split the compute between nilpotent and other
classes. Nilpotent classes are easy mathematically but not always cheap: the
second-largest generated proof, \num{291860} lines, certifies the nilpotent
class S6\_2582, whose basis has \num{3513} identities.

\section{Worked example: the neighbour of \texorpdfstring{$L$}{L}}\label{sec:c8}

\paragraph{The semigroup.}
The class S6\_3842, the semigroup $C_8$ of Section~19 of \citet{leezhang2015},
has the elements $0, ex, x, ed, d, e$ and the table below,
row times column. The idempotents $d$ and $e$ generate $\{0,ed,d,e\}$ with
$de=0$, and $\{0,x,ex\}$ is a null ideal with $dx=x$, $xe=x$ and $xd=0$. The
two shaded entries are the whole difference from $L$ (S6\_3843), in which
$d\cdot e$ and $ed\cdot e$ are not zero.

\begin{center}\small
\begin{tabular}{r|cccccc}
$\cdot$ & $0$ & $ex$ & $x$ & $ed$ & $d$ & $e$\\
\midrule
$0$  & $0$ & $0$  & $0$ & $0$  & $0$  & $0$\\
$ex$ & $0$ & $0$  & $0$ & $0$  & $0$  & $ex$\\
$x$  & $0$ & $0$  & $0$ & $0$  & $0$  & $x$\\
$ed$ & $0$ & $0$  & $ex$& $0$  & $ed$ & \colorbox{orange!25}{$0$}\\
$d$  & $0$ & $0$  & $x$ & $0$  & $d$  & \colorbox{orange!25}{$0$}\\
$e$  & $0$ & $ex$ & $ex$& $ed$ & $ed$ & $e$\\
\end{tabular}
\end{center}

\paragraph{Why it is finitely based.}
What separates the two semigroups is the product $x\,y^2\,x$: under
$x\mapsto e$, $y\mapsto d$ it is $ede\neq0$ in $L$ but $0$ in S6\_3842,
because $ed\cdot e=0$. As a result S6\_3842 satisfies identities that fail in
$L$, such as $xyxy\approx xyyx$ and the exchange of adjacent squares
$x\,y^2z^2t^2\,x\approx x\,z^2y^2t^2\,x$. The family
$x\,y_1^2\cdots y_n^2\,x\approx x\,y_n^2\cdots y_1^2\,x$, which no finite set of
identities of $L$ derives (\autoref{sec:nfb}), holds in both semigroups, but
in S6\_3842 it follows from a basis whose identities have at most five
variables.

\paragraph{The basis.}
The certified basis is Proposition~19.1 of \citet{leezhang2015}: the twelve
schemata below, in which each of $H$, $K$, $T$ is independently either absent
or one further variable, expand to 38 identities.

\begin{center}\small
\begin{tabular}{@{}l@{\hspace{3em}}l@{}}
$xxx\approx xx$ & $xHyyx\approx yyxHyyx$ \\
$xxyx\approx xyx$ & $xHyyxKx\approx xHyyxKyyx$ \\
$xyxx\approx xyx$ & $xHyKxy\approx xHyKyx$ \\
$xyxzx\approx xzxyx$ & $xHyKxy\approx yHxKxy$ \\
$xHyKxTy\approx xHxKxTyyx$ & $xHyyxKy\approx xHyyxKx$ \\
& $yHxKyyx\approx xHxKyyx$ \\
& $xHyKyyx\approx xHxKyyx$ \\
\end{tabular}
\end{center}

The first candidate consisted of $xx\approx xxx$, $xxyx\approx xyx$,
$xyxx\approx xyx$, $xyxy\approx xyyx$, $xyxy\approx yxxy$,
$xyxzx\approx xzxyx$, $xyxzy\approx xyzyx$, $xyxzy\approx xzyyx$ and
$xyxzy\approx xyzxy$, all in at most three variables. It does not derive
$xykxty\approx xxkxtyyx$, the instance of the fifth schema with $H$ absent: a
worker found a semigroup of order 20 that satisfies the nine identities but not
this one. Replacing the ninth identity by the five-variable instance
$xhykxty\approx xhxkxtyyx$ gives a basis equivalent to the published one; we
checked the equivalence by explicit derivations in both directions, outside
Lean.

\paragraph{The decisive observation.}
Two words are equal in S6\_3842 exactly when they agree on a few combinatorial
features. Call a letter \emph{simple} if it occurs once, and cut a word into
\emph{components} wherever the letters before the cut and after it are
disjoint. Words $u$ and $v$ are equal in the semigroup if and only if they
have the same letters, the same simple letters and the same sequence of
component contents, and, around each simple letter $t$, written $u=p\,t\,s$,
the same contents of the trailing components of $p$ that avoid the letters of
$s$, and agree on whether $p$ and $s$ share a letter. A normal form follows:
each component is rewritten to a canonical word determined by these features.
Completeness then reduces to two facts, that the basis rewrites every word to
its normal form and that words with the same features have the same normal
form. The criterion was checked by computation on all words in up to four
variables of bounded length before it was given to the agent as the plan of
the proof.

\paragraph{How the proof was found.}
\begin{center}\small
\begin{tabular}{@{}>{\raggedright\arraybackslash}p{0.17\linewidth}>{\raggedright\arraybackslash}p{0.77\linewidth}@{}}
\toprule
8 August & The class enters the finishing campaign; a worker records its structure. \\
29 August & A worker's semigroup of order 20 refutes the nine-identity candidate; the referee withdraws it the same day and makes the published basis the only route. \\
29 August to 5 September & Parts of Lee and Zhang's argument are proved in Lean, but the completeness proof does not close. \\
5 September & The referee's structural analysis gives the criterion and the normal form above. \\
6 September, 01:10 & A dedicated agent delivers the proof: 25 new modules with \num{2918} lines and 169 theorems, reusing 63 modules (\num{14497} lines) of the earlier attempt. \\
6 September & The final audit certifies the class together with all others. \\
\bottomrule
\end{tabular}
\end{center}
The certificate consists of two theorems, \texttt{BasisFor} for the table of
S6\_3842 with the 38 identities and for its transpose with the reversed
identities, which use only the three standard axioms. Over the four weeks the
shared log has 86 records about the class, from four agents.

\section{The four proofs that no finite basis exists}\label{sec:nfb}

In Lean a semigroup is nonfinitely based when no finite list of identities over
the variables $\mathbb{N}$ is a basis for it (\autoref{sec:proof}). The audit
accepts a theorem as the proof for one of the four classes only if its
statement is exactly \texttt{NonfinitelyBased} of the catalogue table, with no
hypotheses, and if it uses only the three standard axioms.

All four proofs follow one scheme. They exhibit identities $u_n\approx v_n$ of
the semigroup, one for every $n$, and a property of words that separates
$u_n$ from $v_n$. For every bound $k$, once $n$ is large enough compared with
$k$, no derivation step that applies an identity in at most $k$ variables can
change the property, so $u_n\approx v_n$ does not follow from such
identities. The identities of a finite set have at most $k$ variables for some
$k$, so the set is not a basis. The proofs differ in the family and in the
property (\autoref{tab:nfb}).

\begin{table}[H]
\centering\small
\begin{tabular}{@{}>{\raggedright\arraybackslash}p{0.1\linewidth}>{\raggedright\arraybackslash}p{0.36\linewidth}>{\raggedright\arraybackslash}p{0.34\linewidth}>{\raggedright\arraybackslash}p{0.09\linewidth}@{}}
\toprule
Class & Identities $u_n\approx v_n$ & Property preserved by short steps & Lean \\
\midrule
$L$ \newline S6\_3843 &
$x\,y_1^2\cdots y_n^2\,x\approx x\,y_n^2\cdots y_1^2\,x$ &
words equal in $L$ to $u_n$ fall into two classes, with the squares in the
forward or in the reverse order, and no short step crosses between them &
6 files, \num{5337} lines \\
$B_2^1$ \newline S6\_8564 &
$x\,y_1\cdots y_n\,x\,y_n\cdots y_1\approx x\,y_n\cdots y_1\,x\,y_1\cdots y_n$ &
a pattern of the occurrences of the letters; the argument uses that
$xytyx$ and $xtyxy$ are isoterms, equal in $B_2^1$ only to themselves &
8 files, \num{12747} lines \\
$A_2^1$ \newline S6\_13747 &
$X\,y\,X'\,y\,X\approx X\,y\,X'\,y\,X\,y\,X'\,y\,X$, where $X=x_1\cdots x_n$
and $X'$ is its reverse &
the word $u_n$ itself: no short step changes it &
8 files, \num{10502} lines \\
$A_2^g$ \newline S6\_8878 &
$(x_1^2\cdots x_n^2)^2\approx(x_1^2\cdots x_n^2)^3$ &
the parity with which a word winds around the cycle
$x_1\to x_2\to\dots\to x_n\to x_1$ &
2 files, \num{2625} lines \\
\bottomrule
\end{tabular}
\caption{The four nonfinitely based semigroups of order 6. A short step applies
an identity whose number of variables is small compared with $n$. Line counts
exclude audit modules and the 404 lines of shared definitions.}\label{tab:nfb}
\end{table}

\paragraph{Sources.}
The families and arguments are those of \citet{zhangluo2011} for $L$, of
\citet{perkins1969} in the reconstruction of \citet{sapir1988} for $B_2^1$,
and of \citet{trahtman1987} and \citet{sapir1988} for $A_2^1$. For $L$ the
formalization reconstructs the proof of Lemma~2 of \citet{zhangluo2011}, on
which their reduction to connected identities rests. For $A_2^g$ the proof
characterizes the identities of the semigroup by a marked directed graph of
the word together with a parity, and the winding argument works on that
characterization. The four semigroups are those of the main theorem of
\citet{leezhang2015}. Where the catalogue lists a published table with its
elements in another order ($B_2^1$ and $A_2^g$), the proof checks the
relabelling and shows that both tables satisfy the same identities.

\section{A representative proof}\label{sec:proof}

Every certificate refers to the definitions below, from the shared library
(\texttt{Equational.lean} and \texttt{Nonfinite.lean}, abridged to the
definitions). A word is a nonempty list of variables, \texttt{G.eval} evaluates
a word in \texttt{G} under a valuation of its variables, and \texttt{Derives}
is equational deduction with substitution of words for variables.

\begin{lstlisting}
def SatisfiedBy (e : Identity α) (G : Semigroup S) : Prop :=
  ∀ valuation : α → S, G.eval valuation e.lhs = G.eval valuation e.rhs

def Models (G : Semigroup S) (basis : List (Identity α)) : Prop :=
  ∀ e, e ∈ basis → e.SatisfiedBy G

/-- Equational derivability for semigroup words over one infinite variable type. -/
inductive Derives (basis : List (Identity α)) : Word α → Word α → Prop
  | fromBasis {e : Identity α} : e ∈ basis → Derives basis e.lhs e.rhs
  | refl (u : Word α) : Derives basis u u
  | symm {u v : Word α} : Derives basis u v → Derives basis v u
  | trans {u v w : Word α} :
      Derives basis u v → Derives basis v w → Derives basis u w
  | prepend (p : Word α) {u v : Word α} :
      Derives basis u v → Derives basis (p ++ u) (p ++ v)
  | appendRight {u v : Word α} :
      Derives basis u v → (q : Word α) → Derives basis (u ++ q) (v ++ q)
  | subst {u v : Word α} :
      Derives basis u v → (σ : α → Word α) →
        Derives basis (u.bind σ) (v.bind σ)

/-- A finite list is a basis when it is sound and derives every valid identity. -/
def BasisFor (G : Semigroup S) (basis : List (Identity α)) : Prop :=
  Models G basis ∧
    ∀ e : Identity α, e.SatisfiedBy G → Derives basis e.lhs e.rhs

def FinitelyBasedOver (G : Semigroup S) (α : Type u) : Prop :=
  ∃ basis : List (Identity α), BasisFor G basis

/-- Failure of finite basability relative to a chosen variable type. -/
def NonfinitelyBasedOver (G : Semigroup S) (α : Type u) : Prop :=
  ¬FinitelyBasedOver G α

/-- The standard nonfinite-basis predicate, using `Nat` variables. -/
def NonfinitelyBased (G : Semigroup S) : Prop :=
  NonfinitelyBasedOver G Nat
\end{lstlisting}

The proof below is complete. It is the module \texttt{Examples/LeftZeroTwo.lean}
of the certified tree, for the left-zero semigroup $S_\ell$ of
Example~\ref{ex:leftzero}, up to its theorem for the semigroup itself; the
remaining 39 lines derive the theorem for the transpose from it. A family
proof of order 6 has the same parts at a larger scale: the table, the basis,
soundness by computation, a normal form reached by derivation (here the first
letter of a word, \texttt{leftZeroDerivesHead}), and completeness by a
valuation that tells different normal forms apart.

\begin{lstlisting}
import SemigroupBasis.FiniteTable
import SemigroupBasis.Opposite

namespace SemigroupBasis.Examples

open SemigroupBasis

def leftZeroTwo : FiniteTable where
  order := 2
  mul := fun a _ => a
  assoc := by intros; rfl

def xy : Word (Fin 2) := ⟨0, [1]⟩
def x : Word (Fin 2) := Word.singleton 0
def leftZeroLaw : Identity (Fin 2) := ⟨xy, x⟩

theorem leftZeroLaw_valid : leftZeroLaw.SatisfiedBy leftZeroTwo.semigroup :=
  leftZeroTwo.checkIdentity_sound leftZeroLaw (by decide)

def natXY : Word Nat := ⟨0, [1]⟩
def natX : Word Nat := Word.singleton 0
def leftZeroBasisLaw : Identity Nat := ⟨natXY, natX⟩
def leftZeroBasis : List (Identity Nat) := [leftZeroBasisLaw]

private def instantiateTwo (u v : Word Nat) : Nat → Word Nat
  | 0 => u
  | _ => v

theorem leftZeroTwo_eval (valuation : Nat → Fin 2) (w : Word Nat) :
    leftZeroTwo.semigroup.eval valuation w = valuation w.head := by
  cases w with
  | mk head tail =>
      induction tail with
      | nil => rfl
      | cons y ys ih =>
          simp only [Semigroup.eval, List.foldl_cons]
          exact ih

theorem leftZeroBasis_models : Models leftZeroTwo.semigroup leftZeroBasis := by
  intro e he
  simp only [leftZeroBasis, List.mem_singleton] at he
  subst e
  intro valuation
  rfl

theorem leftZeroDerivesHead (w : Word Nat) :
    Derives leftZeroBasis w (Word.singleton w.head) := by
  cases w with
  | mk head tail =>
      cases tail with
      | nil => exact Derives.refl _
      | cons next rest =>
          let suffix : Word Nat := ⟨next, rest⟩
          have hbase : Derives leftZeroBasis natXY natX :=
            Derives.fromBasis (e := leftZeroBasisLaw) <| by
              change leftZeroBasisLaw ∈ [leftZeroBasisLaw]
              exact List.Mem.head []
          have h := Derives.subst hbase
            (instantiateTwo (Word.singleton head) suffix)
          simpa [natXY, natX, instantiateTwo, Word.bind, Word.append,
            Word.singleton, suffix] using h

theorem leftZeroBasis_complete : BasisFor leftZeroTwo.semigroup leftZeroBasis := by
  refine ⟨leftZeroBasis_models, ?_⟩
  intro e hsatisfied
  have hheads : e.lhs.head = e.rhs.head := by
    apply Decidable.byContradiction
    intro hne
    let valuation : Nat → Fin 2 := fun z =>
      if z = e.lhs.head then 0 else 1
    have heval := hsatisfied valuation
    rw [leftZeroTwo_eval, leftZeroTwo_eval] at heval
    simp [valuation, Ne.symm hne] at heval
  exact Derives.trans (leftZeroDerivesHead e.lhs) <| by
    rw [hheads]
    exact Derives.symm (leftZeroDerivesHead e.rhs)
\end{lstlisting}

\section{Computing the inclusion order}\label{sec:varietyorder}

The following finite inclusion test follows from
\citet[Chapter~II, Lemmas~11.3 and~11.8]{burris1981}.

\begin{proposition}\label{prop:variety-inclusion}
Let $S,T$ be finite semigroups and let $B_T$ be a finite identity basis of $T$.
Then
\[
  \mathsf{V}(S)\subseteq\mathsf{V}(T)
  \quad\Longleftrightarrow\quad S\models B_T.
\]
Consequently, this inclusion is decidable by finite evaluation of $B_T$ on $S$.
\end{proposition}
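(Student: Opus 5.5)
The plan is to reduce both directions to the definitions of \autoref{sec:factory} and the preliminaries. Recall that $\mathsf{V}(T)$ is the class of semigroups satisfying every identity of $\mathrm{Id}(T)$. Since $B_T$ is a basis of $T$, every semigroup satisfying $B_T$ satisfies all of $\mathrm{Id}(T)$. Conversely, $B_T\subseteq\mathrm{Id}(T)$. Hence $\mathsf{V}(T)$ is exactly the class of models of $B_T$. This is the one place where the hypothesis that $B_T$ is a basis is used.

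For ($\Rightarrow$), I will use that $S\in\mathsf{V}(S)$: trivially $S$ satisfies every identity in $\mathrm{Id}(S)$. From $\mathsf{V}(S)\subseteq\mathsf{V}(T)$ I get $S\in\mathsf{V}(T)$, so $S\models\mathrm{Id}(T)\supseteq B_T$.

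For ($\Leftarrow$), suppose $S\models B_T$. By the remark above, $S\models\mathrm{Id}(T)$, that is, $\mathrm{Id}(T)\subseteq\mathrm{Id}(S)$. Now take any $R\in\mathsf{V}(S)$. Then $R$ satisfies all of $\mathrm{Id}(S)$, hence all of $\mathrm{Id}(T)$, so $R\in\mathsf{V}(T)$. This step needs only the definition of $\mathsf{V}(\cdot)$ used in the paper, not the closure of varieties under H, S, P; the cited lemmas of \citet{burris1981} are needed only to reconcile this definition with the HSP one. If I must make that identification, I will invoke Birkhoff's theorem that the HSP-closure of $S$ is the model class of $\mathrm{Id}(S)$.

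For decidability, $B_T$ is a finite list, and each identity in it uses finitely many variables, say $m$. Whether $S$ satisfies it depends only on the assignments of those variables, and there are $|S|^m$ of them. Each check evaluates two words by table lookups, exactly as \texttt{checkIdentity\_sound} does for the example in \autoref{sec:proof}. I need to justify that values of the other variables are irrelevant, which follows by induction on words. I expect no step to be a real obstacle. The only subtlety is the finiteness argument: the definition of $\models$ quantifies over all homomorphisms from $\mathcal{X}^+$, and these must be reduced to maps on the finitely many variables occurring in the identity.
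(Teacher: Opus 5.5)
Your proof is correct and follows the paper's argument. The paper's step ``$\mathsf{V}(S)$ is the smallest variety containing $S$, so $\mathsf{V}(S)\subseteq\mathsf{V}(T)$ iff $S\in\mathsf{V}(T)$'' is exactly what you unpack via $S\in\mathsf{V}(S)$ and $\mathrm{Id}(T)\subseteq\mathrm{Id}(S)$; both proofs then use that $\mathsf{V}(T)$ is the model class of $B_T$ and check the $|S|^k$ assignments, and your version needs only the paper's own definition of $\mathsf{V}(\cdot)$, not any closure property of varieties.
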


\begin{proof}
  Since $V(S)$ is the smallest variety containing $S$, the inclusion
  $V(S)\subseteq V(T)$ holds if and only if $S\in V(T)$. As $B_T$ is an identity
  basis of $T$, this is equivalent to $S\models B_T$. Each identity in $B_T$
  involving $k$ variables can be checked on all $|S|^k$ assignments. Since $S$
  and $B_T$ are finite, this gives a terminating decision procedure.
\end{proof}

Evaluating each certified basis on a generator of each of the \num{505}
varieties confirms all \num{13321} inclusions proved by Vampire or
transitivity and adds exactly the four directions that remained undecided
after 12 hours of search each. The complete order has \num{13325} strict
inclusions, \num{1317} covering pairs and \num{87} maximal varieties.
Explicit equational derivations have since been found for three of the four
directions; one remains to be constructed. The inclusion itself is settled
by the finite check and the certified completeness of the bases.

The bases are certified in Lean. The finite table evaluations and Vampire's
equational derivations have not been formalized in Lean.
 
\end{document}